
\documentclass[a4rpaper,twoside,leqno,twocolumn]{article}

\usepackage{ltexpprt}
\usepackage{hyperref}

\usepackage{graphicx}

\usepackage{amsmath}
\usepackage{amssymb}
\usepackage{amsfonts}
\usepackage{times}
\usepackage{mathptmx} 
\usepackage{datetime}


\setlength{\textwidth}{180truemm}
\setlength{\textheight}{232truemm}
\setlength{\oddsidemargin}{-7truemm}
\setlength{\evensidemargin}{-7mm}
\setlength{\topmargin}{-12truemm}

\newcommand{\sinhc}{\mathrm{sinhc}}
\newcommand{\sinc}{\mathrm{sinc}}
\DeclareMathAlphabet{\bit}{OML}{cmm}{b}{it}
\newcommand{\ad}{\mathrm{ad}}           
\def\<{\leqslant}           
\def\>{\geqslant}           

\def\d{\partial}
\def\wh{\widehat}
\def\wt{\widetilde}

\def\Re{\mathrm{Re}}   
\def\Im{\mathrm{Im}}   

\def\cH{\mathcal{H}}   
\def\mR{{\mathbb R}}    
\def\mC{\mathbb{C}}    

\def\Tr{\mathrm{Tr}}       
\def\rT{{\rm T}}        
\def\supp{\mathrm{supp}}    

\def\lexp{\mathop{\overleftarrow{\exp}}}
\def\rexp{\mathop{\overrightarrow{\exp}}}





\def\bE{\mathbf{E}}    



\def\bra{{\langle}}
\def\ket{{\rangle}}


\def\re{{\rm e}}        
\def\rd{{\rm d}}        



\def\br{\mathbf{r}}
\def\x{\times}
\def\ox{\otimes}

\def\fB{\mathfrak{B}}

\def\fF{\mathfrak{F}}

\def\fH{\mathfrak{H}}

\def\cW{\mathcal{W}}

\def\cC{\mathcal{ C}}
\def\cR{\mathcal{ R}}

\def\sA{\mathsf{A}}
\def\sB{\mathsf{B}}

\def\cI{\mathcal{I}}

\def\ups{\upsilon}
\def\Ups{\Upsilon}


\begin{document}

\title{\Large Lie-algebraic Connections Between Two Classes of Risk-sensitive Performance
Criteria for Linear Quantum Stochastic Systems\thanks{This work is supported by the Air Force Office of Scientific Research (AFOSR) under agreement number FA2386-16-1-4065 and the Australian Research Council under grant DP180101805.}}
\author{
Igor G. Vladimirov$^\dagger$,
\and
Ian R. Petersen$^\dagger$,
\and
Matthew R. James%
\thanks{Research School of Engineering, College of Engineering and Computer Science,
Australian National University, Canberra, 
Australia, {\tt
igor.g.vladimirov@gmail.com, i.r.petersen@gmail.com, matthew.james@anu.edu.au}
}
}
\date{}

\maketitle







\begin{abstract}
\small\baselineskip=9pt
This paper is concerned with the original risk-sensitive performance criterion for quantum stochastic systems and its recent quadratic-exponential  counterpart. These functionals are of different structure because of the noncommutativity of quantum variables and have their own useful features  such as tractability of evolution equations and robustness properties.
We discuss a Lie algebraic connection between these two classes of cost functionals for open quantum harmonic oscillators using an apparatus of complex Hamiltonian kernels and symplectic factorizations.   These results are aimed to extend useful properties from one of the classes of risk-sensitive costs to the other and develop state-space equations for computation and optimization of these criteria in quantum robust control and filtering problems.
\end{abstract}

\section{Introduction}
\setcounter{equation}{0}

Open quantum harmonic oscillators (OQHOs) \cite{GZ_2004}, governed by linear quantum stochastic differential equations (QSDEs),  constitute an important application of the Hudson-Parthasarathy  calculus \cite{HP_1984,P_1992} to the modelling of quantum systems interacting with external bosonic fields. The class of OQHOs is closed under concatenation, and their interconnection into a quantum feedback network  \cite{GJ_2009,JG_2010} (for example, a closed-loop system formed from a plant and controller, both modelled as OQHOs) is also an OQHO whose parameters are expressed in terms of the subsystems.

Quantum control and filtering problems for such systems 
\cite{B_1983,B_2010,BH_2006,BVJ_2007,EB_2005,J_2004,J_2005,JNP_2008,NJP_2009,VP_2013a,VP_2013b,WM_2010} aim to achieve certain dynamic properties for quantum plants by using measurement-based feedback with classical controllers and filters or coherent (measurement-free) feedback   involving direct or field-mediated connection \cite{ZJ_2012} with other quantum systems. 
The performance criteria  combine qualitative requirements (such as stability) with optimality principles in the form of the minimization of cost functionals. In particular, quantum linear quadratic Gaussian (LQG) control and filtering \cite{EB_2005,MP_2009,NJP_2009} are concerned with  minimising the mean square values of the closed-loop system variables, similarly to the  classical LQG control and filtering problems \cite{AM_1989,KS_1972}. 

The quantum risk-sensitive performance criterion, originated in \cite{J_2004,J_2005} for measurement-based quantum control and filtering problems (see also \cite{DDJW_2006,YB_2009}), employs the mean square value of a time-ordered exponential (TOE) driven by a function of the system variables.  This cost functional imposes an exponential penalty on the system variables and involves their multi-point quantum states at different moments of time. Since, even in the Gaussian case \cite{CH_1971,KRP_2010},  such states do not reduce to classical joint probability distributions because of the noncommutativity of quantum variables,  the quantum risk-sensitive cost differs from its classical predecessors \cite{BV_1985,J_1973,W_1981}. Nevertheless, this cost functional allows for tractable evolution equations and an appropriate modification of the information state techniques  in application to the measurement-based quantum control settings.

The structure of the classical risk-sensitive performance criteria (as the exponential moment of a quadratic function of the system variables over a time interval) has recently been adopted in a quadratic-exponential functional (QEF) \cite{VPJ_2018a}. Despite a more complicated evolution (compared to the original quantum risk-sensitive cost), the QEF leads to upper bounds \cite{VPJ_2018a}  for the tail distribution of the corresponding quadratic function of the quantum system variables in the spirit of the large deviations theory \cite{DE_1997,S_1996}. Moreover, the QEF gives rise to guaranteed upper bounds \cite{VPJ_2018b} for the worst-case value of the quadratic cost when the actual quantum state may depart from its nominal model, with the departure being described in terms of the quantum relative entropy \cite{J_2004,OW_2010,YB_2009}. The role of the QEF  in the quantum robust performance estimates is similar to the connections between risk-sensitive control and minimax LQG control for classical stochastic systems with a relative entropy description of statistical uncertainty in the driving noise \cite{DJP_2000,P_2006,PJD_2000,PUS_2000}.

The useful properties can be extended from one of the risk-sensitive costs to the other through bilateral links  between these two classes of quantum performance criteria, 
which is the main theme of the present paper.  
To this end, we develop a continuous-time analogue of the results of \cite{VPJ_2018c}, which leads to a Lie-algebraic correspondence between the QEF and the original TOE-based quantum risk-sensitive cost driven by a quadratic function of the system variables. An important ingredient of this connection is an isomorphism between the Lie algebra of quadratic functions of the system variables of the OQHO with complex symmetric kernels and the Lie algebra of complex Hamiltonian kernels, 
which are infinitesimal generators of complex symplectic kernels 
(all these kernels are matrix-valued).

The paper is organised as follows.
Section~\ref{sec:OQHO} specifies the class of linear quantum stochastic systems under consideration.
Section~\ref{sec:funs} describes the original quantum risk-sensitive cost and its quadratic-exponential counterpart.
Section~\ref{sec:quad} represents a class of quadratic functions of system variables using complex symmetric matrix-valued  measures.
Section~\ref{sec:iso} describes an isomorphism of this class to a Lie algebra of complex Hamiltonian kernels. 
Section~\ref{sec:cont} establishes a Lie-algebraic correspondence between two classes of TOE-based and quadratic-exponential functions of system variables.
Section~\ref{sec:move} represents this correspondence in integro-differential form. 
Section~\ref{sec:spec} discusses the Lie-algebraic correspondence and specific nonanticipative measures for the QEF and TOE-based criteria driven by quadratic functions of the current system variables.
Section~\ref{sec:conc} makes concluding remarks.

\section{Open quantum harmonic oscillators}
\label{sec:OQHO}
\setcounter{equation}{0}

We consider an OQHO with (an even number of) dynamic variables $X_1, \ldots, X_n$ (for example, pairs of conjugate  quantum mechanical positions and momenta \cite{S_1994}). These system variables
are time-varying self-adjoint operators on (a dense domain of) a complex separable Hilbert space $\fH$ and are assembled into a vector $X:= (X_k)_{1\< k\< n}$ 
(vectors are organised as columns,  
and the time arguments are often omitted for brevity). They satisfy the canonical commutation relations (CCRs)
$    \cW_{u+v} = \re^{i u^{\rT}\Theta v} \cW_u \cW_v
$ for all $u,v\in \mR^n$,
where $i:= \sqrt{-1}$ is the imaginary unit, and  $\cW_u:= \re^{iu^{\rT} X}
$ is the unitary Weyl operator \cite{F_1989}. 
Here,
$\Theta$ is a nonsingular real antisymmetric matrix specifying the matrix
\begin{equation}
\label{Theta}
    [X, X^{\rT}]
    :=    ([X_j,X_k])_{1\< j,k\< n}
     =
     2i \Theta \ox \cI_{\fH}
\end{equation}
of commutators $[X_j, X_k]
    :=
    X_jX_k - X_kX_j$
as the Heisenberg infinitesimal form
(on a dense domain in $\fH$) for the Weyl CCRs, with $\ox$ the tensor product, and $\cI_{\fH}$ the identity operator on $\fH$.  The matrix $\Theta \ox \cI_{\fH}$ will be identified with $\Theta$. The system variables of the OQHO evolve in time according to a linear QSDE
\begin{equation}
\label{dX}
    \rd X
    =
    \sA X \rd t+ \sB  \rd W,
\end{equation}
where $\sA\in \mR^{n\x n}$, $\sB  \in \mR^{n\x m}$ are constant matrices whose structure is clarified below.
This QSDE is driven by the vector $W:=(W_k)_{1\< k \< m}$ 
of an even number $m$ of quantum Wiener processes $W_1, \ldots, W_m$ which are time-varying self-adjoint operators on a symmetric Fock space $\fF$ \cite{P_1992,PS_1972}. These operators  represent the external bosonic fields and 
have a complex positive semi-definite Hermitian Ito matrix $\Omega := I_m + iJ\in \mC^{m\x m}$, so that $\rd W \rd W^{\rT}
    =
    \Omega \rd t
$, with $I_m$ the identity matrix of order $m$. Its imaginary part
$        J
        :=
       {\scriptsize\begin{bmatrix}
           0 & I_{m/2}\\
           -I_{m/2} & 0
       \end{bmatrix}}
$
is an orthogonal real antisymmetric matrix of order $m$ (so that $J^2=-I_m$), which specifies CCRs for the quantum Wiener processes as $[W(s), W(t)^{\rT}] = 2i\min(s,t)J$ for all $s,t\>0$.  
The 
matrices $\sA$, $\sB $ in (\ref{dX}) are not arbitrary and satisfy the physical realizability (PR)  condition \cite{JNP_2008,SP_2012}
\begin{equation}
\label{PR}
  \sA \Theta + \Theta \sA^{\rT} + \sB J\sB ^{\rT} = 0,
\end{equation}
which pertains to the preservation of the CCRs (\ref{Theta}) in time. Such matrices are parameterized as
$  \sA = 2\Theta (K + M^\rT J M)$,
$
  \sB  = 2\Theta M^{\rT}
$
in terms of the energy and coupling matrices $K = K^{\rT} \in \mR^{n\x n}$, $M \in \mR^{m\x n}$, which specify the quadratic  system Hamiltonian $    \frac{1}{2}
    X^{\rT} KX
$ and the vector $MX$ of system-field coupling operators.

The relations (\ref{Theta})--(\ref{PR}), which describe the OQHO, reflect the effect of the external bosonic fields on its dynamics. Accordingly,  the system-field Hilbert space is organised as the tensor product
$    \fH := \fH_0 \ox \fF
$,
where $\fH_0$ is a Hilbert space for the action of the initial system variables $X_1(0), \ldots, X_n(0)$. The space $\fH$ is endowed with a filtration $(\fH_t)_{t\> 0}$, where  $\fH_t:= \fH_0 \ox \fF_t$, and $(\fF_t)_{t\> 0}$ is the Fock space filtration. At any time $t\> 0$,  the system variables $X_j(t)$ act on the subspace $\fH_t$ for all $j=1, \ldots, n$, while the input field variables $W_k(t)$ act on the subspace $\fF_t$ for all $k=1, \ldots, m$, in which sense both sets of processes (and nonanticipative functions thereof) are adapted to the filtration $(\fH_t)_{t\> 0}$. 
The statistical properties of the system and field variables depend on a density operator (quantum state) $\rho$ (a positive semi-definite self-adjoint operator on $\fH$ with unit trace $\Tr \rho = 1$) which also has a tensor-product structure:
$  \rho
  :=
  \rho_0 \ox \ups
$,
where $\rho_0$ is the initial system state on $\fH_0$, and the fields are in the vacuum state $\ups$ \cite{HP_1984,P_1992}. In particular, $\rho$  specifies the expectation $    \bE \xi
    :=
    \Tr(\rho \xi)
$ for quantum variables $\xi$ on the space $\fH$.

Since the solution of the linear QSDE (\ref{dX}) satisfies
$    X(t)
    =
    \re^{(t-s)\sA} X(s) + \int_s^t \re^{(t-\tau)\sA} \sB  \rd W(\tau)
$
for all     $
    t\> s\> 0$, and the future Ito increments of the quantum Wiener process $W$ commute with the past system variables (so that $[\rd W(\tau), X(s)^\rT] = 0$ for all $\tau\> s\> 0$), then $[X(t), X(s)^\rT] = \re^{(t-s)\sA} [X(s), X(s)^\rT]$. Hence,
the CCRs (\ref{Theta}), which are concerned with one point in time, extend to different moments as
\begin{align}
\label{XsXtcomm}
    [X(s), X(t)^{\rT}]
    & =
    2i\Lambda(s-t),
    \qquad
    s,t\> 0,\\
\label{Lambda}
    \Lambda(\tau)
    & =
    -\Lambda(-\tau)^\rT
    =
    \left\{
    \begin{matrix}
    \re^{\tau \sA}\Theta & {\rm if}\  \tau\> 0\\
    \Theta \re^{-\tau \sA^{\rT}} & {\rm if}\  \tau< 0\\
    \end{matrix}
    \right.,
\end{align}
where $\Lambda$ is the two-point CCR matrix of the system variables, with $\Lambda(0) = \Theta$. The linear structure of the QSDE (\ref{dX}) enters (\ref{Lambda}) through the matrix $\sA$, which is assumed to be Hurwitz.

\section{Quantum risk-sensitive cost functionals}
\label{sec:funs}
\setcounter{equation}{0}

The original quantum risk-sensitive cost functional 
\cite{J_2004,J_2005} employs an auxiliary quantum process in the form of the (leftward)  TOE
\begin{equation}
\label{Rt}
    R_{\theta}(t)
    :=
    \lexp
    \Big(
    \frac{\theta}{2}
    \int_0^t
    \Sigma(s)
    \rd s
    \Big),
    \qquad
    t\>0,
\end{equation}
which is the fundamental solution of the operator differential equation (ODE)
\begin{equation}
\label{Rdot}
    \dot{R}_{\theta}(t)
    =
    \tfrac{\theta }{2}
    \Sigma(t)
    R_{\theta}(t),
    \qquad
    R_{\theta}(0) = \cI_{\fH}.
\end{equation}
Here, $\dot{(\, )}:= \d_t(\cdot)$ is the time derivative, $\theta\>0$ is the risk-sensitivity  parameter, and
$\Sigma(t)$ is a time-dependent positive semi-definite
self-adjoint
quantum variable which can be a function (for example, quadratic)  of the current system variables (or, more generally, their past history over the time interval $[0,t]$), so that $\Sigma$ is an adapted quantum process. Since, in general,  $R_\theta(t)$ is a non-Hermitian  operator with a complex mean value, its mean square is used instead as a cost functional
\begin{equation}
\label{Et0}
    E_{\theta}(t)
    :=
    \bE
    (
        R_{\theta}(t)^{\dagger}
        R_{\theta}(t)
    )
\end{equation}
(with $(\cdot)^\dagger$ the operator adjoint),
which imposes an exponential penalty on the system variables through $\Sigma$ due to the multiplicative structure of the TOE $R_\theta$, with $\theta$ controlling its severity.  For simplicity, we do not include an additional terminal cost (on the time interval $[0,t]$) in (\ref{Et0}); cf. \cite[Eqs. (19)--(21)]{J_2005}.

If
the quantum variables $\Sigma(s)$ commuted  with each other for all $0\< s\< t$,  then (\ref{Et0}) would reduce to
\begin{equation}
\label{Eclass}
    E_{\theta}(t)
    =
    \bE
    \re^{
        \theta
            \int_0^t \Sigma(s)\rd s
    },
\end{equation}
which is organised as the classical exponential-of-integral performance criteria \cite{BV_1985,J_1973,W_1981}.
In the noncommutative quantum setting, the right-hand side of (\ref{Eclass}) provides an alternative to the original quantum risk-sensitive cost functional in (\ref{Rt}), (\ref{Et0}). Its quadratic-exponential counterpart \cite{VPJ_2018a} is given by
\begin{equation}
\label{QEF}
    \Xi_{\theta}(t)
    :=
    \bE \re^{\theta\varphi(t)}
    =
    \bE\re^{\theta \int_0^t \psi(s)\rd s},
\end{equation}
where $\varphi$ is a quantum process defined for any time $t\> 0$ by
\begin{align}
\label{phipsi}
    \varphi(t)
    :=
    \int_0^t
    \psi(s)
    \rd s,
    \qquad
    \psi(s)
    :=
    X(s)^{\rT} \Pi X(s).
\end{align}
Here, $\Pi$ is a real positive semi-definite symmetric matrix of order $n$ (the dependence of $\Xi_{\theta}(t)$ on $\Pi$ is omitted for brevity). Accordingly, $\varphi(t)$, $\psi(t)$ are positive semi-definite self-adjoint operators on the system-field space $\fH$, which follows from the representation
$    \psi
    =
    \zeta^{\rT}\zeta
    =
    \sum_{k=1}^n
    \zeta_k^2
$
in terms of the auxiliary self-adjoint quantum variables constituting the vector
$    \zeta
    :=
    (\zeta_k)_{1\< k \< n}
    :=
    \sqrt{\Pi} X
$.

Although the original quantum risk-sensitive cost $E_{\theta}$ in (\ref{Rt}), (\ref{Et0}) and its quadratic-exponential counterpart $\Xi_{\theta}$ in (\ref{QEF}), (\ref{phipsi}) 
are identical 
in the classical case if $\Sigma=\psi$, they are different in the noncommutative quantum setting (even if $\Sigma = \psi$) because of the discrepancy between the TOE and the usual operator exponential.
Moreover, at any given instant $t\>0$, the QEF $\Xi_{\theta}(t)$ is the moment-generating function 
for the classical probability distribution (the averaged spectral measure \cite{H_2001})  of the self-adjoint quantum variable $\varphi(t)$. In contrast to $\Xi_{\theta}(t)$, the quantity $E_\theta(t)$ in (\ref{Et0}) 
does not lend itself to a similar  association with 
a single $\theta$-independent quantum variable.

Since the evolution equations for the cost functionals (\ref{Et0}), (\ref{QEF}) are obtained by averaging the corresponding time derivatives as
$    \dot{E}_\theta
    =
    \bE
    (
        (R_{\theta}^{\dagger}
        R_{\theta})^{^\centerdot}
    )$ and
$
    \dot{\Xi}_\theta
    =
    \bE
    (
    (\re^{\theta\varphi})^{^\centerdot}
        )$, 
we will be concerned mainly with the dynamics of the processes  $R_{\theta}^{\dagger}
        R_{\theta}$ and $\re^{\theta\varphi}$ themselves. 
        Also, we will abandon the assumption on self-adjointness of the operator $\Sigma(t)$ which drives (\ref{Rdot}). Then an appropriate modification of \cite{VPJ_2018b} yields
\begin{equation}
\label{RRdot}
        (R^{\dagger}R)^{^\centerdot}
        =
        \tfrac{\theta}{2}
        (
        (\Sigma R)^\dagger R + R^\dagger \Sigma R
        )
        =
      \theta R^\dagger (\Re \Sigma) R,
\end{equation}
where the subscript $\theta$ in $R_\theta$ is omitted for brevity, and the real part is extended 
to operators as $\Re \xi:= \frac{1}{2}(\xi+\xi^{\dagger})$. Also,
\begin{equation}
\label{ephidot}
    (
        \re^{\theta \varphi}
    )^{^\centerdot}
    =
    \theta
    \re^{\frac{\theta}{2} \varphi}
    \Psi_\theta
    \re^{\frac{\theta}{2} \varphi},
    \quad
    \Psi_\theta
    :=
        \sinhc
    \big(
        \tfrac{\theta}{2}
        \ad_{\varphi}
    \big)
    (\psi),
\end{equation}
with $\ad_\xi(\cdot) := [\xi, \cdot]$,
where the evaluation of the hyperbolic  sinc function    $\sinhc(z)    := \sinc(-iz)$ at $\frac{\theta}{2}\ad_{\varphi}$ yields a linear superoperator acting on $\psi$. 
The relation (\ref{ephidot}) holds regardless of the particular structure of the OQHO dynamics and the processes in (\ref{phipsi}) (except that $\dot{\varphi}=\psi$) and follows from the identities
\begin{equation}
\label{Magnus}
    (\re^{\phi})^{^\centerdot}
    =
    \Ups(\ad_\phi)(\dot{\phi})
    \re^\phi
    =
    \re^\phi
    \Ups(-\ad_\phi)(\dot{\phi})
\end{equation}
(in view of the Magnus lemma \cite{M_1954})
for a time-varying operator $\phi$, which reduce to the standard exponential derivative when $[\phi, \dot{\phi}] = 0$,
where
\begin{equation}
\label{Ups}
    \Ups(z)
    :=
    \re^{\frac{z}{2}}\sinhc \tfrac{z}{2}
    =
    \left\{
    {\small\begin{matrix}1 & {\rm if}\ z= 0\\
    \tfrac{\re^z - 1}{z} & {\rm otherwise}\end{matrix}}
    \right..
\end{equation}
Therefore, 
the processes $R_\theta ^\dagger R_\theta$ and $\re^{\theta \varphi}$ 
reproduce each other (in which case, $R_\theta$ is a non-Hermitian operator square root of $\re^{\theta \varphi}$) if $\Re \Sigma$ in (\ref{RRdot}) is appropriately matched (and becomes unitarily equivalent) to $\Psi_\theta$ in (\ref{ephidot}), similarly to \cite[Theorem 3]{VPJ_2018b}. This suggests a link between the TOE-based quantum risk-sensitive functionals (\ref{Et0})  and the QEFs (\ref{QEF}), which   requires a more explicit representation of the process $\Psi_\theta$. To this end, the two-point CCRs (\ref{XsXtcomm}) and the specific quadratic dependence of $\varphi$, $\psi$ on the past history of the system variables lead to
\begin{align}
\nonumber
  \Psi_{\theta}(t)
  := &
\psi(t)
  +
    \tfrac{\theta}{2}
    \Big(
        \Re
        \Big(
            X(t)^{\rT}
            \int_0^t
            \alpha_{\theta,t}(\sigma)
            X(\sigma)
            \rd \sigma
        \Big)\\
\label{Psi}
        &+
        \int_{[0,t]^2}
        X(\sigma)^{\rT}
        \beta_{\theta,t}(\sigma,\tau)
        X(\tau)
        \rd \sigma
        \rd \tau
    \Big),
\end{align}
which is a quadratic function of the past history of the system variables over the time interval $[0,t]$, with the  functions $\alpha_{\theta, t}: [0,t]\to \mR^{n\x n}$, $\beta_{\theta,t}: [0,t]^2\to \mR^{n\x n}$ being related to the two-point CCR matrix $\Lambda$ in (\ref{Lambda}), with $\beta_{\theta,t}$ being symmetric:  $\beta_{\theta,t}(\sigma,\tau) = \beta_{\theta,t}(\tau,\sigma)^\rT$.  These kernel functions are obtained in \cite[Theorem~1, Lemma~2]{VPJ_2018a} using the fact that quadratic forms in quantum variables with CCRs
  form a Lie algebra with respect to the commutator
   (see, for example,  \cite[Appendix A]{VPJ_2018a} and references therein).

\section{A class of quadratic functions of system variables}
\label{sec:quad}
\setcounter{equation}{0}

In view of the structure of the right-hand side of (\ref{Psi}), consider the following unified representation for a class of quadratic functions of the  system variables of the OQHO.
Let $Q: \fB_+^2\to \mC^{n\x n}$ be a countably additive measure of bounded total variation on the $\sigma$-algebra $\fB_+^2$ of Borel subsets of the orthant $\mR_+^2$ (with $\mR_+:= [0,+\infty)$ the set of nonnegative real numbers). With any such $Q$, we associate  a quantum variable
\begin{equation}
\label{phiQ}
    \phi_Q
    :=
    \int_{\mR_+^2}
    X(\sigma)^\rT
    Q(\rd \sigma \x \rd \tau)
    X(\tau),
\end{equation}
which is a quadratic function of the system variables. 
For example, $\varphi(t)$, $\psi(t)$ in  (\ref{phipsi}) and $\Psi_\theta(t)$ in (\ref{Psi}) are particular cases of (\ref{phiQ}), as discussed below.
Since we will be concerned with commutators of the quantum variables (\ref{phiQ}), then,  due to the two-point CCRs (\ref{XsXtcomm}), the kernel measure $Q$ can be assumed to be symmetric in the sense that $Q(A\x B) = Q(B\x A)^\rT$ for  any $A,B \in \fB_+$  
(such measures form a complex linear space, which we denote by $\cC_n$).

Indeed, in view of the two-point CCRs (\ref{XsXtcomm}), for any \emph{antisymmetric} $\mC^{n \x n}$-valued measure $Q:=(q_{jk})_{1\< j,k\< n}$ on $\mR_+^2$ (with $Q(A\x B) = -Q(B\x A)^\rT$ for  any $A,B \in \fB_+$), the quantum variable (\ref{phiQ}) is a scalar:
$
    \phi_Q
     =
    \int_{\mR_+^2}
    \sum_{j,k=1}^n
    X_j(s)X_k(t)
    q_{jk}(\rd s \x \rd t)
    =
    \int_{\mR_+^2}
    \sum_{j,k=1}^n
    X_k(t)X_j(s)$ $
    q_{kj}(\rd t \x \rd s)$$
    =
    -
    \int_{\mR_+^2}
    \sum_{j,k=1}^n
    X_k(t)X_j(s)$    $
    q_{jk}(\rd s \x \rd t)=
    -
    \int_{\mR_+^2}
    \sum_{j,k=1}^n
    (X_j(s)X_k(t)- [X_j(s),X_k(t)])$
    $
    q_{jk}(\rd s \x \rd t)=
    -
    \phi_Q
    +
    2i
    \int_{\mR_+^2}
    \sum_{j,k=1}^n
    \lambda_{jk}(s-t)
    q_{jk}(\rd s \x \rd t)
    =
    i
    \int_{\mR_+^2}
    \bra
        \Lambda(s-t),
        Q(\rd s \x \rd t)
    \ket
$, 
which does not contribute to the commutators involving $\phi_Q$.  Here, $\bra M, N\ket:= \Tr(M^* N)$ is the Frobenius inner product of complex matrices, and    $\lambda_{jk}$ are the entries of the two-point CCR matrix $\Lambda$ in (\ref{Lambda}).  Therefore, since any $\mC^{n\x n}$-valued measure $Q$ on $\fB_+^2$ splits into symmetric and antisymmetric parts $Q_+$, $Q_-$ as
$
    Q_{\pm}(A\x B)
     :=
    \frac{1}{2}(Q(A\x B) \pm Q(B\x A)^\rT)
$,
then
$
    \phi_Q
    =
    \phi_{Q_+}+ \phi_{Q_-}$ $
    =
    \phi_{Q_+}
    +
    i
    \int_{\mR_+^2}
    \bra
        \Lambda(s-t),
        Q_-(\rd s \x \rd t)
    \ket
$
coincides with $\phi_{Q_+}$ up to an additive constant which is irrelevant for the commutators.

Now, for any $Q\in \cC_n$, its pointwise real and imaginary parts $\Re Q$, $\Im Q$ are symmetric $\mR^{n\x n}$-valued measures on $\fB_+^2$ (we denote the real linear space of such measures by $\cR_n$, so that $\cC_n = \cR_n + i\cR_n$), giving rise to the decomposition
\begin{equation}
\label{phiQReIm}
    \phi_Q = \phi_{\Re Q}  + i \phi_{\Im Q},
\end{equation}
where $\phi_{\Re Q}$, $\phi_{\Im Q}$ are self-adjoint quantum variables. Hence,
\begin{equation}
\label{phiQdagger}
    \phi_Q^\dagger =
    \phi_{\Re Q}  - i \phi_{\Im Q}
    =
    \phi_{\overline{Q}},
\end{equation}
with $\overline{Q}$ the pointwise complex conjugate of the measure $Q$. In accordance with (\ref{phiQReIm}), (\ref{phiQdagger}), any $Q \in \cR_n$ yields a self-adjoint quantum variable $\phi_Q$.

Also, we define the product of a measure $Q\in \cC_n$ and the two-point CCR function $\Lambda$ in (\ref{XsXtcomm}) as a function $\Lambda Q: \mR_+\x \fB_+ \to \mC^{n\x n}$ (which is a measure over its second argument) given by
\begin{equation}
\label{LambdaQ}
    (\Lambda Q) (t,B)
    :=
    \int_{\mR_+}
    \Lambda(t-\sigma) Q(\rd \sigma \x B)
\end{equation}
for all $t\>0$, $B\in \fB_+$. The function $\Lambda Q$ specifies the kernel of a linear integral operator which maps a function $f$ on $\mR_+$ with values in $\mC^n$ (or the space of vectors of $n$ quantum variables on $\fH$) to a function $g:= (\Lambda Q)(f)$ (of the same nature)  as
\begin{equation}
\label{fg}
    g(t)
    :=
    \int_{\mR_+^2}
    \Lambda(t-\sigma)
    Q(\rd \sigma \x \rd \tau)
    f(\tau),
    \qquad
    t\> 0.
\end{equation}
This integral operator corresponds to complex Hamiltonian matrices. In order to emphasize this analogy, $\Lambda Q$ will be referred to as a \emph{complex Hamiltonian  kernel} (CHK) (in the sense of the symplectic structure specified by $\Lambda$). CHKs are infinitesimal generators of \emph{complex symplectic kernels} (CSKs) $S: \mR_+ \x \fB_+ \to \mC^{n\x n}$ (which are also measures over the second argument) satisfying
\begin{equation}
\label{compsymp}
    \int_{\mR_+^2}
    S(s,\rd \sigma)
    \Lambda(\sigma-\tau)
    S(t,\rd \tau)^\rT
    =
    \Lambda(s-t),
    \ \
    s,t\>0.
\end{equation}
Such kernels $S$  form a semigroup, which preserves the two-point CCRs (\ref{XsXtcomm}) in the sense that the latter are inherited by the quantum process
$
    \wt{X}(t)
    :=
    \int_{\mR_+}
    S(t,\rd \sigma)
    X(\sigma)
$
as $    [\wt{X}(s),\wt{X}(t)^\rT]
     =
    \int_{\mR_+^2}
    S(s,\rd \sigma)
    [X(\sigma), X(\tau)^\rT]
    S(t,\rd \tau)^\rT
     =
    2i
    \int_{\mR_+^2}
    S(s,\rd \sigma)
    \Lambda(\sigma-\tau)
    S(t,\rd \tau)^\rT
    =
    2i
    \Lambda(s-t)$ for all $
    s,t\> 0
$. 

\section{Lie-algebraic isomorphism to complex Hamiltonian kernels}\label{sec:iso}
\setcounter{equation}{0}

The significance of the CHK $\Lambda Q$ in  (\ref{LambdaQ}), (\ref{fg}) for commutation relations is clarified by
\begin{align}
\nonumber
    [\phi_Q, X(t)]
     = &
    \int_{\mR_+^2}
    [X(\sigma)^\rT
    Q(\rd \sigma \x \rd \tau)
    X(\tau),\, X(t)
    ]    \\
\nonumber
    =&
    -
    \int_{\mR_+^2}
    [X(t),X(\sigma)^\rT]
    Q(\rd \sigma \x \rd \tau)
    X(\tau)\\
\nonumber
    & +
    \int_{\mR_+^2}
    \big(X(\sigma)^\rT
    Q(\rd \sigma \x \rd \tau)
    [X(\tau), X(t)^\rT]
    \big)^\rT    \\
\nonumber
    =&
    -2i
    \int_{\mR_+^2}
    \Lambda(t-\sigma)
    Q(\rd \sigma \x \rd \tau)
    X(\tau)\\
\nonumber
    & +
    2i
    \int_{\mR_+^2}
    \big(X(\sigma)^\rT
    Q(\rd \sigma \x \rd \tau)
    \Lambda(\tau - t)
    \big)^\rT    \\
\label{phiQXcomm}
    =&
    -4i
    \int_{\mR_+^2}
    \Lambda(t-\sigma)
    Q(\rd \sigma \x \rd \tau)
    X(\tau),
    \qquad
    t\> 0,
\end{align}
so that $[\phi_Q, X] = -4i(\Lambda Q)(X)$.
Here, the  derivation and antisymmetry properties of the commutator have been combined  with the antisymmetry of $\Lambda$ in (\ref{XsXtcomm}), (\ref{Lambda}) and the symmetry of $Q$.

\begin{lemma}
\label{lem:phiQcomm}
The quantum variables $\phi_Q$ in (\ref{phiQ}), associated with measures $Q\in \cC_n$, form a Lie algebra, in which
\begin{equation}
\label{phiQcomm}
    [\phi_{Q_1}, \phi_{Q_2}]
    =
    \phi_Q,
\end{equation}
where
\begin{equation}
\label{QQQ}
    Q = 4i (Q_1\Lambda Q_2 - Q_2 \Lambda Q_1)
\end{equation}
is also such a measure given by
\begin{align}
\nonumber
    Q(A\x B)
    = &
    4i
    \int_{\mR_+^2}
    \big(Q_1(A\x \rd s)\Lambda(s-t) Q_2(\rd t \x B)\\
\label{QQ}
    & - Q_2(A\x \rd s)\Lambda(s-t) Q_1(\rd t \x B)
    \big)
\end{align}
for all $A,B\in \fB_+$, where $\Lambda$ is the two-point CCR function from (\ref{XsXtcomm}).
\end{lemma}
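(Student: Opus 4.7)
The plan is to derive the commutator identity by applying the derivation (Leibniz) property of $[\cdot,\cdot]$ to $\phi_{Q_2}=\int_{\mR_+^2} X(\sigma)^\rT Q_2(\rd\sigma\x\rd\tau)X(\tau)$ and substituting the basic identity (\ref{phiQXcomm}). This splits $[\phi_{Q_1},\phi_{Q_2}]$ into two contributions under the double integral: one from $[\phi_{Q_1},X(\sigma)^\rT]$ acting on the left of the matrix measure, and one from $[\phi_{Q_1},X(\tau)]$ acting on the right. In both, the commutator with $\phi_{Q_1}$ is replaced by the linear form $-4i(\Lambda Q_1)X$ from (\ref{phiQXcomm}), so the result is again quadratic in the system variables with an explicit kernel measure, which is exactly what is needed to identify it with some $\phi_Q$.

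The second of the two contributions is direct: inserting (\ref{phiQXcomm}) for $[\phi_{Q_1},X(\tau)]$ gives
\[
-4i\int X(\sigma)^\rT Q_2(\rd\sigma\x\rd\tau)\Lambda(\tau-s)Q_1(\rd s\x\rd u)X(u),
\]
whose kernel is, up to the factor $-4i$, the convolution product $Q_2\Lambda Q_1$. The first contribution is the delicate one, because the commutator sits to the left as a row of operators; to put it in the same canonical form $X^\rT(\cdot)X$, I would transpose it using the symmetry $Q_1(A\x B)=Q_1(B\x A)^\rT$ and the antisymmetry $\Lambda(\tau)=-\Lambda(-\tau)^\rT$ from (\ref{Lambda}). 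The two sign flips combine to turn the $-4i\,Q_2^\rT\Lambda Q_1^\rT$-type expression into $+4i\,Q_1\Lambda Q_2$, so that after renaming dummy variables the total becomes $4i(Q_1\Lambda Q_2-Q_2\Lambda Q_1)$, which is (\ref{QQ}) and hence (\ref{QQQ}), (\ref{phiQcomm}).

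To complete the lemma I would then verify that the measure $Q$ from (\ref{QQ}) belongs to $\cC_n$: transposing its right-hand side and once more using $\Lambda(\tau)^\rT=-\Lambda(-\tau)$ together with the row--column swap symmetries of $Q_1$ and $Q_2$ simply interchanges the two summands inside the braces, so $Q(A\x B)=Q(B\x A)^\rT$. Closure of the family $\{\phi_Q:Q\in\cC_n\}$ under the commutator, together with the bilinearity, antisymmetry and Jacobi identity automatically inherited from the operator commutator, then yields the Lie-algebra structure. The main obstacle will be the transpose bookkeeping in the first of the two contributions: done naively the two terms would add rather than subtract, and it is the double sign flip (from $\Lambda$ under time reversal and from each $Q_j$ under row--column swap) that produces the correct Lie-bracket structure.
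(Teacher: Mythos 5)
Your proposal is correct and follows essentially the same route as the paper's proof: apply the derivation property of the commutator to $\phi_{Q_2}$, substitute $[\phi_{Q_1},X]=-4i(\Lambda Q_1)(X)$ from (\ref{phiQXcomm}) into both resulting terms, and convert the left-acting (transposed) term into canonical form $X^\rT(Q_1\Lambda Q_2)X$ via the symmetry of $Q_1$ and the antisymmetry of $\Lambda$, then check symmetry of the resulting kernel $Q$. The only quibble is your phrase ``double sign flip'': the row--column symmetry $Q_1(A\x B)=Q_1(B\x A)^\rT$ contributes no sign, and the single sign change $\Lambda(\tau)^\rT=-\Lambda(-\tau)$ is what turns the $-4i$ prefactor into the $+4i\,Q_1\Lambda Q_2$ term, exactly as in the paper's displayed relation following (\ref{phiQcomm1}).
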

\begin{proof}
 By a reasoning, similar to that in (\ref{phiQXcomm}), (\ref{phiQ}) implies
 \begin{align}
 \nonumber
    [\phi_{Q_1}, \phi_{Q_2}]
    = &
    \int_{\mR_+^2}
    [\phi_{Q_1},
    X(\sigma)^\rT
    Q_2(\rd \sigma \x \rd \tau)
    X(\tau)    ]\\
 \nonumber
    = &
    \int_{\mR_+^2}
    [\phi_{Q_1},
    X(\sigma)]^\rT
    Q_2(\rd \sigma \x \rd \tau)
    X(\tau)\\
 \nonumber
    & +
    \int_{\mR_+^2}
    X(\sigma)^\rT
    Q_2(\rd \sigma \x \rd \tau)
    [\phi_{Q_1},X(\tau)]\\
\nonumber
    = &
    -4i
    \int_{\mR_+^2}
    ((\Lambda Q_1)(X)(\sigma))^\rT
    Q_2(\rd \sigma \x \rd \tau)
    X(\tau)\\
 \nonumber
    & -
    4i
    \int_{\mR_+^2}
    X(\sigma)^\rT
    Q_2(\rd \sigma \x \rd \tau)
    (\Lambda Q_1)(X)(\tau)\\
 \label{phiQcomm1}
    = &
    \int_{\mR_+^2}
    X(\sigma)^\rT
    Q(\rd \sigma \x \rd \tau)
    X(\tau),
\end{align}
where $Q\in \cC_n$ is given by (\ref{QQ}), or, equivalently, (\ref{QQQ}), thus establishing (\ref{phiQcomm}). The symmetry of $Q$ follows from that of the measures $Q_1$, $Q_2$ and the antisymmetry of $\Lambda$. In (\ref{phiQcomm1}), use is also made of the relation
$    -\int_{\mR_+}
    ((\Lambda Q_1)(X)(\sigma))^\rT
        Q_2(\rd \sigma \x B)
         =-
    \int_{\mR_+^3}
    X(v)^\rT
    (\Lambda(\sigma-\tau)
    Q_1(\rd \tau \x \rd v))^\rT
        Q_2(\rd \sigma \x B)
         =
    \int_{\mR_+^3}
    X(v)^\rT
    Q_1(\rd v \x \rd \tau)
    \Lambda(\tau-\sigma)
        Q_2(\rd \sigma \x B)
         =
    \int_{\mR_+}
    X(v)^\rT
    (Q_1 \Lambda Q_2)(\rd v \x B)
$, 
where $Q_1 \Lambda Q_2$ is a $\mC^{n\x n}$-valued measure (not necessarily symmetric)  given by 
    $(Q_1 \Lambda Q_2)(A\x B)
    =
    \int_{\mR_+^2}
    Q_1(A \x \rd \sigma)
    \Lambda(\sigma-\tau)
        Q_2(\rd \tau \x B)$ 
for all $A,B\in \fB_+$.
\hfill$\blacksquare$\end{proof}

In accordance with (\ref{LambdaQ}), the multiplication of measures in $Q_1 \Lambda Q_2$ is associative.
From Lemma~\ref{lem:phiQcomm}, it follows that the Lie algebra of quantum variables $\phi_Q$ in (\ref{phiQ}), considered for measures $Q\in \cC_n$, is isomorphic to the Lie algebra of CHKs. Indeed, since (\ref{QQQ}) implies that
$
    4i\Lambda Q
    =
    (4i)^2 (\Lambda Q_1\Lambda Q_2 - \Lambda Q_2 \Lambda Q_1)
    =
    [4i\Lambda Q_1, 4i\Lambda Q_2]
$,
the Lie-algebraic isomorphism is described by the correspondence
\begin{equation}
\label{iso}
  \phi_Q \longleftrightarrow 4i\Lambda Q.
\end{equation}
Note that $Q\in \cC_n$ can be recovered from the two-sided Laplace transform of $\Lambda Q$ given by
\begin{align}
\nonumber
    \int_{\mR}
    \re^{-st}&
    \int_{\mR_+}
    \Lambda(t-\sigma)
    Q(\rd \sigma \x B)
    \rd t\\
\label{LambdaQLap}
    & =
    \wh{\Lambda}(s)
    \int_{\mR_+}
    \re^{-s\sigma}
    Q(\rd \sigma \x B)
\end{align}
in the strip $\{s\in \mC:\ 0 < \Re s < |\ln \br(\re^\sA)|\}$
for any $B \in \fB_+$, where $\br(\cdot)$ is the spectral radius of a square matrix, so that
  $\ln \br(\re^\sA) = \max_{1\< k\< n} \Re \lambda_k$, with $\lambda_1, \ldots, \lambda_n$ the eigenvalues of the Hurwitz matrix $\sA$. Here, the two-sided Laplace transform
\begin{align}
\nonumber
    \wh{\Lambda}(s)
    := &
    \int_{\mR}
    \re^{-st}
    \Lambda(t)
    \rd t
    =
    \int_{\mR_+}
    \big(
    \re^{-st}
    \re^{t \sA}
    \Theta
    +
    \re^{st}
    \Theta
    \re^{t \sA^{\rT}}
    \big)
    \rd t\\
\nonumber
    =&
    (sI_n-\sA)^{-1}\Theta
    -
    \Theta
    (sI_n + \sA^\rT)^{-1}\\
\nonumber
     =&
    (sI_n-\sA)^{-1}
    \big(
    \Theta(sI_n + \sA^\rT)
    -
    (sI_n-\sA)\Theta
    \big)
    (sI_n + \sA^\rT)^{-1}\\
\nonumber
    =&
    (sI_n-\sA)^{-1}
    \big(
    \sA\Theta
    +
    \Theta \sA^\rT
    \big)
    (sI_n + \sA^\rT)^{-1}\\
\label{Lambdahat}
    \hskip-3mm=&
    -
    (sI_n-\sA)^{-1}
    \sB J\sB ^\rT
    (sI_n + \sA^\rT)^{-1}
\end{align}
is a rational function, which is obtained by using the matrix exponential structure of $\Lambda$ in (\ref{Lambda}) and the PR property (\ref{PR}) of the matrices $\sA$, $\sB $. Since $\sA$ is assumed to be Hurwitz, the integrals in (\ref{Lambdahat}) are convergent over the strip $|\Re s| < |\ln \br(\re^\sA)|$. A sufficient  condition for unique recoverability of $Q$ from $\Lambda Q$ using (\ref{LambdaQLap}) is $\det (\sB J\sB ^\rT)\ne 0$, for which it is necessary that $n\< m$.

\section{A Lie-algebraic correspondence between TOE-based and quadratic-exponential  functions of system variables}
\label{sec:cont}
\setcounter{equation}{0}

Similarly to the case \cite{VPJ_2018c} of products of quadratic-exponential functions of a finite number of quantum variables with CCRs, a combination of Dynkin's lemma \cite{D_1947} 
with the Lie-algebraic isomorphism (\ref{iso}) 
leads to
\begin{equation}
\label{expQQQ}
    \re^{\phi_{Q_1}}
    \re^{\phi_{Q_2}}
    =
    \re^{\phi_Q},
\end{equation}
where $Q_1, Q_2, Q \in \cC_n$ 
are related by the complex symplectic factorization:
\begin{equation}
\label{eQeQeQ}
    \re^{4i\Lambda Q_1}
    \re^{4i\Lambda Q_2}
    =
    \re^{4i\Lambda Q}.
\end{equation}
All three exponentials in (\ref{eQeQeQ}) are integral operators with CSKs in the sense of (\ref{compsymp}).
A continuous-product version of this representation formula is
\begin{equation}
\label{FG1}
    \lexp
    \Big(
    \int_0^t
    \phi_{F_s}
    \rd s
    \Big)
    =
    \re^{\phi_{G_t}}.
\end{equation}
Here, $F_t, G_t \in \cC_n$ are time-dependent measures satisfying
\begin{equation}
\label{FG2}
    \lexp
    \Big(
    4i
    \int_0^t
    \Lambda
    F_s
    \rd s
    \Big)
    =
    \re^{4i\Lambda G_t}
\end{equation}
for all $t\>0$, which is equivalent to the ODE
\begin{equation}
\label{expGdot}
    \big(
        \re^{4i\Lambda G_t}
    \big)^{^\centerdot}
    =
    4i\Lambda F_t
    \re^{4i\Lambda G_t},
    \qquad
    G_0 = 0.
\end{equation}
A similar representation holds for the rightward TOEs
\begin{equation}
\label{FG3}
    \rexp
    \Big(
    \int_0^t
    \phi_{F_s}
    \rd s
    \Big)
    =
    \re^{\phi_{G_t}},
    \
    \rexp
    \Big(
    4i
    \int_0^t
    \Lambda
    F_s
    \rd s
    \Big)
    =
    \re^{4i\Lambda G_t},
\end{equation}
in which case, (\ref{expGdot}) is replaced with
\begin{equation}
\label{expGdotright}
    \big(
        \re^{4i\Lambda G_t}
    \big)^{^\centerdot}
    =
    4i
    \re^{4i\Lambda G_t}
    \Lambda F_t,
    \qquad
    G_0 = 0.
\end{equation}
The following theorem employs (\ref{expQQQ})--(\ref{expGdotright}) in order to relate two extended classes of functions of the OQHO variables whose averaging leads to the TOE-based and QEF costs in  (\ref{Et0}), (\ref{QEF}).


\begin{theorem}
\label{th:symp}
Suppose the quantum process $R$ in (\ref{Rt}), (\ref{Rdot})\footnote{the parameter $\theta$ is incorporated in the measures below, and the dependence on $\theta$ is omitted for brevity, or, equivalently, $\theta = 1$
} is driven as
\begin{equation}
\label{RF}
    R(t) := \lexp\Big(\frac{1}{2}\int_0^t \phi_{F_s}\rd s\Big)
\end{equation}
by the quantum variable (\ref{phiQ}) with a time-dependent measure $F_t\in \cC_n$. Then
\begin{equation}
\label{RRN}
     R(t)^\dagger R(t) = \re^{\phi_{N_t}},
\end{equation}
where $N_t \in \cR_n$ is a time-dependent measure, evolving as
\begin{equation}
\label{Ndot}
    \big(
    \re^{4i\Lambda N_t }
    \big)^{^\centerdot}
    =
    4i
    \re^{4i\Lambda \overline{G}_t }
    \Lambda (\Re F_t)
    \re^{4i\Lambda G_t },
    \qquad
    N_0=0,
\end{equation}
and $G_t \in \cC_n$ is a time-dependent measure governed by
\begin{equation}
\label{Gdot}
    \big(
    \re^{4i\Lambda G_t }
    \big)^{^\centerdot}
    =
    2i\Lambda F_t
    \re^{4i\Lambda G_t },
    \qquad
    G_0 = 0.
\end{equation}
\end{theorem}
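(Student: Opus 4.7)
The plan is to reduce the claim to the Lie-algebraic composition law (\ref{expQQQ})--(\ref{eQeQeQ}) by expressing both $R(t)$ and $R(t)^\dagger$ as ordinary quadratic-exponentials $\re^{\phi_Q}$, multiplying them, and differentiating. First, I would apply the leftward representation (\ref{FG1})--(\ref{expGdot}) to the rescaled measure $F_s/2$, so that $4i\Lambda(F_s/2) = 2i\Lambda F_s$ matches the right-hand side of (\ref{Gdot}); this immediately identifies $R(t) = \re^{\phi_{G_t}}$ with $G_t \in \cC_n$ governed by (\ref{Gdot}).

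Next I would handle $R(t)^\dagger$. Differentiating (\ref{RF}) gives $\dot R = \tfrac{1}{2}\phi_{F_t} R$, and (\ref{phiQdagger}) yields $\dot{R^\dagger} = \tfrac{1}{2} R^\dagger \phi_{\overline{F_t}}$, so $R^\dagger$ is a \emph{rightward} TOE of the generators $\phi_{\overline{F_s}/2}$. Applying (\ref{FG3})--(\ref{expGdotright}) to this rescaled measure and identifying the result with $(\re^{\phi_{G_t}})^\dagger = \re^{\phi_{\overline{G_t}}}$ (via (\ref{phiQdagger}) once more) produces the complementary ODE $(\re^{4i\Lambda \overline{G_t}})^{^\centerdot} = 2i\,\re^{4i\Lambda \overline{G_t}}\,\Lambda \overline{F_t}$.

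With both exponential representations in hand, the composition (\ref{expQQQ})--(\ref{eQeQeQ}) applied to $\overline{G_t}$ and $G_t$ gives $R^\dagger R = \re^{\phi_{\overline{G_t}}}\re^{\phi_{G_t}} = \re^{\phi_{N_t}}$, where $N_t \in \cC_n$ is determined by the symplectic factorization $\re^{4i\Lambda \overline{G_t}}\re^{4i\Lambda G_t} = \re^{4i\Lambda N_t}$. Self-adjointness of $R^\dagger R$ forces $\phi_{N_t}^\dagger = \phi_{N_t}$, hence $\overline{N_t} = N_t$ by (\ref{phiQdagger}) (up to the antisymmetric-scalar ambiguity noted after (\ref{phiQ})), so $N_t \in \cR_n$. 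A Leibniz differentiation of the factorization, combined with (\ref{Gdot}) and the conjugate ODE above, then yields
\[(\re^{4i\Lambda N_t})^{^\centerdot} = 2i\,\re^{4i\Lambda \overline{G_t}}\Lambda(\overline{F_t} + F_t)\re^{4i\Lambda G_t} = 4i\,\re^{4i\Lambda \overline{G_t}}\Lambda(\Re F_t)\re^{4i\Lambda G_t},\]
which is precisely (\ref{Ndot}).

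The main obstacle is the directional bookkeeping when the adjoint is pushed through the Lie-algebraic isomorphism (\ref{iso}): one must align the leftward TOE for $R$ with the rightward TOE for $R^\dagger$, and verify that the two symplectic kernels $\re^{4i\Lambda G_t}$ and $\re^{4i\Lambda \overline{G_t}}$ correspond consistently to $R$ and $R^\dagger$, which relies on the uniqueness of the measure recovered from its image $\Lambda Q$ noted after (\ref{Lambdahat}). Once this alignment is secured, the remainder reduces to the Leibniz rule and extraction of real parts.
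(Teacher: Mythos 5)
Your proposal is correct and follows essentially the same route as the paper's proof: representing $R(t)=\re^{\phi_{G_t}}$ via the leftward composition law, $R(t)^\dagger=\re^{\phi_{\overline{G}_t}}$ as a rightward TOE of $\phi_{\overline{F}_s}/2$, forming the complex symplectic factorization $\re^{4i\Lambda N_t}=\re^{4i\Lambda \overline{G}_t}\re^{4i\Lambda G_t}$, and obtaining (\ref{Ndot}) by the Leibniz rule. The only cosmetic difference is that you make the rescaling by $1/2$ and the self-adjointness argument for $N_t\in\cR_n$ explicit, both of which the paper leaves implicit.
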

\begin{proof}
By applying (\ref{FG1})--(\ref{expGdot}), it follows that the process $R$ in (\ref{RF}) can be represented as
\begin{equation}
\label{RG}
    R(t) = \re^{\phi_{G_t}},
\end{equation}
where the time-dependent measure $G_t\in \cC_n$ satisfies
\begin{equation}
\label{Glexp}
    \re^{4i\Lambda G_t} = \lexp \Big(2i \int_0^t \Lambda F_s \rd s \Big),
\end{equation}
which is equivalent to (\ref{Gdot}).  In view of (\ref{phiQdagger}), the adjoint of (\ref{RG}) takes the form
\begin{equation}
\label{RG+}
    R(t)^\dagger = \re^{\phi_{G_t}^\dagger} = \re^{\phi_{\overline{G}_t}}.
\end{equation}
By combining (\ref{RG}) with (\ref{RG+}) and using (\ref{expQQQ}), (\ref{eQeQeQ}), it follows that 
$    R(t)^\dagger R(t)
    =
    \re^{\phi_{\overline{G}_t}}
    \re^{\phi_{G_t}}
    =
    \re^{\phi_{N_t}}$, 
thus establishing (\ref{RRN}),
where $\phi_{N_t}$ is self-adjoint, and $N_t\in \cR_n$ satisfies the complex symplectic factorization
\begin{equation}
\label{NGG}
    \re^{4i\Lambda N_t}
    =
    \re^{4i\Lambda \overline{G}_t}
    \re^{4i\Lambda G_t}.
\end{equation}
On the other hand, (\ref{RF}), (\ref{phiQdagger}) 
imply that
\begin{equation}
\label{RF+}
    R(t)^\dagger
    =
    \rexp
    \Big(
        \frac{1}{2}
        \int_0^t
        \phi_{\overline{F}_s}
        \rd s
    \Big).
\end{equation}
Hence, application of (\ref{FG3}), (\ref{expGdotright}) to (\ref{RG+}), (\ref{RF+}) leads to
\begin{equation}
\label{Grexp}
    \re^{4i\Lambda \overline{G}_t}
    =
    \rexp \Big(2i \int_0^t \Lambda \overline{F}_s \rd s \Big).
\end{equation}
By substituting (\ref{Glexp}), (\ref{Grexp}) into (\ref{NGG}) and differentiating, it follows that
$    (\re^{4i\Lambda N_t})^{^\centerdot}
    =
    (\re^{4i\Lambda \overline{G}_t}
        )^{^\centerdot}
    \re^{4i\Lambda G_t}
    +
    \re^{4i\Lambda \overline{G}_t}
    (\re^{4i\Lambda G_t}
    )^{^\centerdot}
     =
    2i
    \re^{4i\Lambda \overline{G}_t}
    \Lambda (F_t + \overline{F}_t)
    \re^{4i\Lambda G_t}
    =
    4i
    \re^{4i\Lambda \overline{G}_t}
    \Lambda (\Re F_t )
    \re^{4i\Lambda G_t}
$, 
which proves (\ref{Ndot}), with 
$N_0 = 0$ due to $R(0) = \cI_\fH$.
\hfill$\blacksquare$\end{proof}

In view of the assumption $F_t \in \cC_n$ (rather than $F_t \in \cR_n$), the quantum variable $\phi_{F_t}$ in (\ref{RF}) is not necessarily self-adjoint, thus extending the original class of TOEs $R$ in \cite{J_2004,J_2005}. Another extension in Theorem~\ref{th:symp} is that the self-adjoint quantum processes $\phi_{N_t}$, specified by measures $N_t \in \cR_n$, contain $\varphi(t)$ in (\ref{phipsi}) as a particular case.

\section{Moving along the Lie-algebraic bridge}
\label{sec:move}
\setcounter{equation}{0}

Theorem~\ref{th:symp} allows
$N_t$ on the right-hand side of (\ref{RRN}) to be found for a given measure $F_t$ in (\ref{RF}), and the other way around,  $F_t$ can be found for a given $N_t$.

The first of these problems pertains to representing the TOE-based original quantum risk-sensitive cost functional as a QEF. An intermediate step of this procedure is concerned with finding the measure $G_t$ in (\ref{RG}) for the given $F_t$.
A comparison of the ODE (\ref{Gdot}) with the general exponential derivative
$
    (
    \re^{4i\Lambda G_t }
    )^{^\centerdot}
    =
    4i
    \Ups(4i \ad_{\Lambda G_t})
    (\Lambda \dot{G}_t)
    \re^{4i\Lambda G_t }
$
(following from (\ref{Magnus}))
leads to
\begin{equation}
\label{FG}
    \Ups
    (4i\ad_{\Lambda G_t})
    (\Lambda \dot{G}_t)
    =
    \tfrac{1}{2}
    \Lambda F_t,
\end{equation}
and hence,
\begin{equation}
\label{LambdaGdot}
    \Lambda \dot{G}_t
    =
    \tfrac{1}{2}
    \mho(4i\ad_{\Lambda G_t})(\Lambda F_t),
\end{equation}
where the function $\Ups$ is given by (\ref{Ups}), and its reciprocal 
$    \mho(z):= \frac{1}{\Ups(z)}
    =
    \sum_{k=0}^{+\infty}
    \frac{b_k}{k!}
    z^k$ 
is the generating function of the Bernoulli numbers \cite{A_1991}
$b_0, b_1, b_2, \ldots$. The relation (\ref{LambdaGdot}) is a nonlinear ODE whose linearised version takes the form 
$    \Lambda \dot{G}_t
    =
    \tfrac{1}{2}\Lambda F_t
    +
    i[\Lambda F_t,\Lambda G_t] + (*)$ 
in view of 
$b_0 = 1$, $b_1 = -\frac{1}{2}$, where $(*)$ contains the higher-order terms, nonlinear with respect to $G_t$.
However, finding $N_t$ from (\ref{NGG}) requires the CSK $S_t: \mR_+\x \fB_+ \to \mC^{n\x n}$ of the integral operator $\re^{4i\Lambda G_t}$ in (\ref{Glexp}) rather than $G_t$ itself. In contrast to (\ref{LambdaGdot}), $S_t$ satisfies a linear integro-differential equation (IDE)
\begin{equation}
\label{SIDE}
    \d_t S_t(v,B)
    =
    2i
    \int_{\mR_+^2}
    \Lambda(v-\sigma)
    F_t(\rd \sigma \x \rd \tau)
    S_t(\tau,B)
\end{equation}
for all $t,v\>0$, $B \in \fB_+$, with the initial condition $S_0(v,B) = \chi_B(v)I_n$, where $\chi_B$ is the indicator function of the set $B$. Then the measure $N_t$ is recovered from the CSK $T_t: \mR_+\x \fB_+ \to \mC^{n\x n}$ of the integral operator $\re^{4i \Lambda N_t}$ satisfying the complex symplectic factorization
\begin{equation}
\label{STS}
    \int_{\mR_+}
    \overline{S}_t(v,\rd \sigma)
    T_t(\sigma, B)
    =
    S_t(v,B).
\end{equation}
The latter is a  linear  equation (of Fredholm first kind) obtained from (\ref{NGG})  due to the property that $\overline{S}_t$ is the CSK of the integral operator $\re^{-4i\Lambda \overline{G}_t} = (\re^{4i\Lambda \overline{G}_t})^{-1}$.
 By a similar reasoning, the following IDE form of (\ref{STS}) for finding $T_t$ (after the IDE (\ref{SIDE}) is solved for $S_t$) is obtained from (\ref{Ndot}):
 \begin{align}
\nonumber
    \int_{\mR_+}&
    \overline{S}_t(v,\rd \sigma)
    \d_t T_t(\sigma,B)\\
\label{STSdot}
    & =
    4i
    \int_{\mR_+^2}
    \Lambda(v-\sigma)
    \Re F_t(\rd \sigma \x \rd \tau)
    S_t(\tau,B)
\end{align}
(with the same initial condition $T_0 = S_0$). Therefore, the representation of the TOE-based left-hand side of (\ref{RRN}) as a quadratic-exponential function of the OQHO variables on the right-hand side can be  carried out by consecutive solution of the IDEs (\ref{SIDE}), (\ref{STSdot}).

The inverse problem (to the above) is to represent the QEF, specified by a given measure $N_t\in \cR_n$,    in the form of the original quantum risk-sensitive functional driven by $\phi_{F_t}$, where $F_t$ is to be found for $N_t$. To this end, the measure $F_t \in \cC_n$ in Theorem~\ref{th:symp} can be organised so  that the TOE $R(t)$ remains  a positive definite self-adjoint square root of $\re^{\phi_{N_t}}$ over the course of time: $R(t) = \re^{\frac{1}{2}\varphi_{N_t}}$ for all  $t\>0$. Then the corresponding measure $G_t$ in (\ref{RG}) is given by $G_t = \frac{1}{2}N_t$, and its substitution into (\ref{FG}) relates $F_t$ to $N_t$ as
\begin{equation}
\label{FN}
    \Lambda F_t
    =
    \Ups(2i\ad_{\Lambda N_t})(\Lambda \dot{N}_t)
    =
    \int_0^1
    L_{\lambda, t}
    \rd \lambda.
\end{equation}
Here, $N_t$ is assumed to have an appropriate distributional time derivative \cite{V_2002}, and
\begin{equation}
\label{L}
  L_{\lambda, t}
  :=
  \re^{2\lambda i \ad_{\Lambda N_t}}
  (\Lambda \dot{N}_t)
\end{equation}
is a CHK satisfying
\begin{equation}
\label{Lder}
    \d_\lambda
    L_{\lambda, t}
    =
    2i
    [\Lambda N_t, L_{\lambda, t}],
    \qquad
    0\< \lambda \< 1,
\end{equation}
with the initial condition (in the sense of the parameter $\lambda$)  $L_{0, t} = \Lambda \dot{N}_t$. Therefore, the quadratic-exponential function of the OQHO variables on the right-hand side of (\ref{RRN}) can be  represented in the TOE-based form on the left-hand side of (\ref{RRN}) by solving the IDE (\ref{Lder}) and performing the integration in (\ref{FN}).

\section{Specific nonanticipative time-varying measures}
\label{sec:spec}
\setcounter{equation}{0}

The above problems in Section~\ref{sec:move} (of finding $N_t$ for $F_t$, and $F_t$ for $N_t$) 
are particularly important for \emph{nonanticipative} time-varying measures $Q_t \in \cC_n$ satisfying
\begin{equation}
\label{suppQ}
    \supp Q_t \subset [0,t]^2,
    \qquad
    t\>0.
\end{equation}
Then
$\phi_{Q_t} =     \int_{[0,t]^2}
X(\sigma)^\rT Q_t(\rd \sigma \x \rd \tau) X(\tau)$ in (\ref{phiQ}) depends only on the past history of the system  variables over the time interval $[0,t]$ (and is, therefore, $\fH_t$-adapted). Furthermore, in view of (\ref{suppQ}) and in accordance with (\ref{LambdaQ}),  the corresponding CHK $\Lambda Q_t$  takes the form 
$    (\Lambda Q_t)(v,B)
    =
    \int_{\mR_+}
    \Lambda(v-\sigma)
    Q_t(\rd \sigma \x B)
     =
    \int_0^t
    \Lambda(v-\sigma)
    Q_t(\rd \sigma \x ([0,t]\bigcap B))
$ 
for any $t,v\>0$, $B \in \fB_+$, and hence, its support (over the second argument) satisfies 
$    \supp (\Lambda Q_t)(v,\cdot)
    \subset
    [0,t]$ 
for any $t\>0$.

Nonanticipative measures specify the quantum processes $\varphi$, $\psi$ in (\ref{phipsi}) and also play a role when the process $\Sigma$, which drives the TOE   in (\ref{Rdot}), is a quadratic function of the current system variables.
More precisely, the operator $\varphi(t)$ in  (\ref{phipsi}), which gives rise to the QEF in (\ref{QEF}), is a particular case of (\ref{phiQ}) obtained as $\varphi(t) = \phi_{N_t}$ by using a nonanticipative measure $N_t$ given by
\begin{equation}
\label{diagmeas}
    N_t(C):=
    \mu
    \{
        \sigma\in[0,t]:\
        (\sigma, \sigma) \in C
    \}
    \Pi,
    \quad
    C \in \fB_+^2,
\end{equation}
where $\mu$ is the one-dimensional Lebesgue measure. The distributional time derivative
of (\ref{diagmeas}) is an atomic nonanticipative measure concentrated at the singleton $\{(t,t)\}$ as
\begin{equation}
\label{diagmeasdot}
    \dot{N}_t(C)
    =
    \chi_C((t,t))\Pi,
\end{equation}
which allows the quantum variable $\psi(t)$ in (\ref{phipsi}) to be represented in the form (\ref{phiQ}) as $\psi(t) = \phi_{\dot{N}_t}$. Substitution of (\ref{diagmeas}), (\ref{diagmeasdot}) into (\ref{Lder}) (which pertains to the problem of finding the TOE-based representation for the QEF) leads to the following IDE for the CHK $L_{\lambda, t}$ in (\ref{L}):
\begin{align}
\nonumber
    \d_\lambda
    L_{\lambda, t}(v,B)
    = &
    2i
    \Big(
    \int_0^t
    \Lambda(v-\sigma)
    \Pi L_{\lambda, t}(\sigma, B)
    \rd \sigma\\
\label{LIDE}
    & -
    \int_0^t
    L_{\lambda, t}(v,\rd\sigma)
    \int_{[0,t]\bigcap B}
    \Lambda(\sigma-\tau)
    \Pi
    \rd\tau
    \Big),
\end{align}
with the initial condition $L_{0, t}(v,B) = 
\chi_B(t) \Lambda(v-t)\Pi$ for all $v\> 0$, $B \in \fB_+$. Here, use is also made of the fact that the measure $N_t$ in (\ref{diagmeas})  satisfies $N_t(A\x B) = \mu([0,t]\bigcap A \bigcap B)\Pi$ for all $A, B \in \fB_+$.

Furthermore, the quantum process $\Psi_1$  in (\ref{ephidot}) (we let $\theta=1$ as mentioned above) can also be represented in the form (\ref{phiQ}) as
\begin{equation}
\label{PsiL}
  \Psi_1(t)
  =
    \sinhc
    (
        \tfrac{1}{2}
        \ad_{\phi_{N_t}}
    )
    (\phi_{\dot{N}_t})
    =
    \phi_{M_t}
\end{equation}
with a nonanticipative measure $M_t \in \cR_n$.  It follows from (\ref{Psi}) that
$M_t$ consists of an absolutely continuous part over the square $[0,t]^2$, a singular   part concentrated at the two edges $([0,t]\x \{t\}) \bigcup (\{t\}\x [0,t])$ of the square, including an atomic part concentrated at the corner $\{(t,t)\}$. Due to the Lie-algebraic isomorphism of Section~\ref{sec:iso}, the measure $M_t$ in (\ref{PsiL}) satisfies
$
    \sinhc
    (
        2i
        \ad_{\Lambda N_t}
    )
    (\Lambda \dot{N}_t)
    =
    \Lambda M_t
$.

We will now return to the first of the problems in Section~\ref{sec:move} on the Lie-algebraic correspondence of Theorem~\ref{th:symp} in application to representing  the TOE-based criterion as a QEF. Suppose the quantum process $R$ in (\ref{Rt}),  (\ref{Rdot}) (with $\theta=1$ for simplicity) is in the form (\ref{RF}), where the time-dependent measure $F_t \in \cR_n$ is given 
by
\begin{equation}
\label{FPi}
    F_t(C)
    :=
    \chi_C((t,t))\Pi,
    \qquad
    t\>0,
    \
    C \in \fB_+^2,
\end{equation}
which is identical to the right-hand side of (\ref{diagmeasdot}). 
This corresponds to $\Sigma(t) = \psi(t) = X(t)^\rT \Pi X(t)$ in view of (\ref{phipsi}).
Then the IDE (\ref{SIDE}) for the CSK $S_t$ of the integral operator $\re^{4i\Lambda G_t}$ in (\ref{Glexp}) is driven by the atomic measure (\ref{FPi}) and reduces to a PDE: 
\begin{equation}
\label{SPDE}
    \d_t S_t(v,B)
    =
    2
    i
    \Lambda(v-t)
    \Pi
    S_t(t,B),
    \qquad
    t, v\>0,\
    B \in \fB_+,
\end{equation}
with the same initial condition $S_0(v,B) = \chi_B(v) I_n$. The transformation
$
    \wt{S}_t(u,B) := S_t(t+u,B)
$
allows the PDE (\ref{SPDE}) to be represented as
\begin{equation}
\label{SPDE1}
    \d_t \wt{S}_t(u,B)
    =
    2
    i
    \Lambda(u)
    \Pi
    \wt{S}_t(0,B)
    +
    \d_u \wt{S}_t(u,B).
\end{equation}
The PDE (\ref{SPDE}) (or its equivalent form (\ref{SPDE1})) can be solved by the method of characteristics or the Laplace transform techniques. The latter  employ the two-sided Laplace transform (\ref{Lambdahat}) of the two-point CCR function (\ref{Lambda})   for the system variables and are also applicable to the IDE (\ref{LIDE}).

\section{Conclusion}
\label{sec:conc}
\setcounter{equation}{0}

For linear quantum stochastic systems, we have established a Lie-algebraic link between two classes of quantum risk-sensitive cost functionals, which pertain to the original TOE-based performance criterion and its recent QEF version.  We have used a unified representation for the quadratic functions of system variables  in these criteria in terms of complex symmetric matrix-valued measures. The Lie-algebraic correspondence has been reduced to IDEs for related complex Hamiltonian and symplectic kernels which involve the two-point CCR matrix of the system variables. These relations will be employed in subsequent publications for extending useful features, such as robustness properties, simplicity of evolution, and applicability of information state techniques, from one of the classes of risk-sensitive costs to the other. The results of the paper will also be used in order to develop  state-space equations for computation and minimization of these functionals in quantum robust control and filtering problems.


\begin{thebibliography}{99}{\scriptsize

\bibitem{AM_1989}
B.D.O.Anderson, and J.B.Moore,
\emph{Optimal Control: Linear Quadratic Methods},
Prentice Hall, London, 1989.
\bibitem{A_1991}
V.I.Arnold, Bernoulli-Euler updown numbers associated with function singularities, their combinatorics and arithmetics, \emph{Duke Math. J.}, vol. 63, no. 2,  1991, pp. 537--555.

\bibitem{B_1983}
V.P.Belavkin, On the theory of controlling observable quantum
systems, \emph{Autom. Rem. Contr.}, vol. 44, no. 2, 1983, pp. 178--188.
\bibitem{B_2010}
V.P.Belavkin, Noncommutative dynamics and generalized master equations, \emph{Math. Notes}, vol. 87, no. 5, 2010, pp. 636--653.
\bibitem{BV_1985}
A.Bensoussan, and J.H.van Schuppen, Optimal control of
partially observable stochastic systems with an
exponential-of-integral performance index, \textit{SIAM J. Control
Optim.}, vol. 23, 1985, pp. 599--613.
\bibitem{BH_2006}
L.Bouten, and R.van Handel, On the separation principle of quantum control,
arXiv:math-ph/0511021v2, August 22, 2006.
\bibitem{BVJ_2007}
L.Bouten, R.Van Handel, M.R.James,
An introduction to quantum filtering,
\emph{SIAM J. Control Optim.}, vol. 46, no. 6, 2007, pp. 2199--2241.


\bibitem{CH_1971}
C.D.Cushen, and R.L.Hudson, A quantum-mechanical central limit theorem,
\emph{J. Appl. Prob.}, vol. 8, no. 3, 1971, pp. 454--469.
\bibitem{DDJW_2006}
C.D'Helon, A.C.Doherty, M.R.James, and S.D.Wilson,
Quantum risk-sensitive control,
Proc. 45th IEEE CDC,
San Diego, CA, USA, December 13--15, 2006, pp. 3132--3137.



\bibitem{DE_1997}
P.Dupuis, and R.S.Ellis,
\textit{A Weak Convergence Approach to the Theory of Large Deviations},
Wiley, New York, 1997.
\bibitem{DJP_2000}
P.Dupuis, M.R.James, and I.R.Petersen, Robust properties of risk-sensitive control,
 \textit{Math. Control Signals Syst.}, vol.  13, 2000, pp. 318--332.
\bibitem{D_1947}
E.B.Dynkin,
Calculation of the coefficients in the Campbell-Hausdorff formula,
\textit{Doklady Akad. Nauk SSSR}, vol. 57, 1947, pp. 323--326.

\bibitem{EB_2005}
S.C.Edwards, and V.P.Belavkin,
Optimal quantum filtering and
quantum feedback control,
arXiv:quant-ph/0506018v2, August 1,  2005.

\bibitem{F_1989}
G.B.Folland, \emph{Harmonic Analysis in Phase Space}, Princeton University Press, Princeton, 1989.



\bibitem{GZ_2004}
C.W.Gardiner, and P.Zoller,
\textit{Quantum Noise}.
Springer, Berlin, 2004.
\bibitem{GJ_2009}
J.Gough, and M.R.James,
Quantum feedback networks: Hamiltonian
formulation,
\emph{Commun. Math. Phys.},  vol. 287, 2009, pp. 1109--1132.
\bibitem{H_2001}
A.S.Holevo, \textit{Statistical Structure of Quantum Theory}, Springer, Berlin, 2001.
\bibitem{HP_1984}
R.L.Hudson,  and K.R.Parthasarathy,
Quantum Ito's formula and stochastic evolutions,
\textit{Commun. Math. Phys.}, vol.  93, 1984, pp. 301--323.
\bibitem{J_1973}
D.H.Jacobson, Optimal stochastic linear systems with
exponential performance criteria and their relation to
deterministic differential games, \textit{IEEE Trans. Autom.
Control}, vol. 18, 1973,  pp. 124--31.
\bibitem{J_2004}
M.R.James, Risk-sensitive optimal control of quantum systems,
\emph{Phys. Rev. A},  vol. 69, 2004, pp. 032108-1--14.
\bibitem{J_2005}
M.R.James, A quantum Langevin formulation of risk-sensitive optimal control,
\emph{J. Opt. B}, vol. 7, 2005, pp. S198--S207.

\bibitem{JG_2010}
M.R.James, and J.E.Gough, Quantum dissipative systems and feedback control design by interconnection,
\textit{IEEE Trans.
Automat. Contr.}, vol. 55, no. 8, 2008, pp. 1806--1821.
\bibitem{JNP_2008}
M.R.James, H.I.Nurdin, and I.R.Petersen,
$H^{\infty}$ control of
linear quantum stochastic systems,
\textit{IEEE Trans.
Automat. Contr.}, vol. 53, no. 8, 2008, pp. 1787--1803.
%


\bibitem{KS_1972}
H.Kwakernaak, and R.Sivan,
\textit{Linear Optimal Control Systems},
Wiley, New York, 1972.
\bibitem{MP_2009}
A.I.Maalouf, and I.R.Petersen,
Coherent LQG control for a class of linear complex quantum systems,
IEEE European Control Conference, Budapest, Hungary, 23-26 August 2009,
pp. 2271--2276.
\bibitem{M_1954}
W.Magnus, On the exponential solution of differential equations for a linear operator,
\textit{Comm. Pure Appl. Math.}, vol. 7, no. 4, 1954, pp. 649--673.
\bibitem{NJP_2009}
H.I.Nurdin, M.R.James, and I.R.Petersen,
Coherent quantum LQG
control,
\textit{Automatica}, vol.  45, 2009, pp. 1837--1846.
\bibitem{OW_2010}
M.Ohya, and N.Watanabe,
Quantum entropy and its applications to quantum
communication and statistical physics,
\textit{Entropy}, vol.  12, 2010, pp. 1194--1245.


\bibitem{P_1992}
K.R.Parthasarathy,
\textit{An Introduction to Quantum Stochastic Calculus},
Birk\-h\"{a}user, Basel, 1992.
\bibitem{KRP_2010}
K.R.Parthasarathy,
What is a Gaussian state?
\textit{Commun. Stoch. Anal.}, vol. 4, no. 2, 2010, pp. 143--160.

\bibitem{PS_1972}
K.R.Parthasarathy, and K.Schmidt,
\emph{Positive Definite Kernels, Continuous Tensor Products, and Central Limit Theorems of Probability Theory},
Springer-Verlag, Berlin, 1972.
\bibitem{P_2015}
K.R.Parthasarathy,
Quantum stochastic calculus and quantum Gaussian processes,
\textit{Indian Journal of Pure and Applied Mathematics}, 2015, vol. 46, no. 6, 2015, pp. 781--807.
\bibitem{P_2006}
I.R.Petersen, Minimax LQG control, \textit{Int. J. Appl. Math. Comput. Sci.}, vol. 16, no. 3, 2006, pp. 309--323.
\bibitem{P_2014}
I.R.Petersen,
Guaranteed non-quadratic performance for quantum systems with nonlinear uncertainties,
 \textit{American Control Conference (ACC),
4-6 June 2014},  arXiv:1402.2086 [quant-ph], 10 February 2014.
\bibitem{P_2017}
I.R.Petersen,
Quantum linear systems theory,
\textit{Open Automat. Contr. Syst. J.},
vol. 8, 2017, pp. 67--93.

\bibitem{PJD_2000}
I.R.Petersen, M.R.James, and P.Dupuis, Minimax optimal control of stochastic uncertain
systems with relative entropy constraints, \textit{IEEE Trans. Automat. Contr.}, vol. 45, 2000,
pp. 398--412.
\bibitem{PUS_2000}
I.R.Petersen, V.A.Ugrinovskii, and A.V.Savkin, \textit{Robust Control Design Using $\cH^\infty$-
Methods}, Springer, London, 2000.
\bibitem{PUJ_2012}
I.R.Petersen, V.A.Ugrinovskii, and M.R.James,
Robust stability of uncertain linear
quantum systems,
\emph{Phil. Trans. R. Soc. A},  vol. 370, 2012, pp. 5354--5363.


\bibitem{S_1994}
J.J.Sakurai,
\textit{Modern Quantum Mechanics},
 Addison-Wesley, Reading, Mass., 1994.
\bibitem{SP_2012}
A.J.Shaiju, and I.R.Petersen,
A frequency domain condition for the physical
realizability of linear quantum systems,
\emph{IEEE Trans. Automat. Contr.}, vol. 57, no. 8, 2012, pp. 2033--2044.

\bibitem{S_1996}
A.N.Shiryaev, \emph{Probability}, 2nd Ed., Springer, New York, 1996.


\bibitem{V_2002}
V.S.Vladimirov.
\textit{Methods of the Theory of Generalized Functions},
London: Taylor \& Francis, 2002.

\bibitem{VP_2011b}
I.G.Vladimirov, and I.R.Petersen,
A dynamic programming approach to finite-horizon coherent quantum LQG control, 	
Australian Control Conference, Melbourne, 10--11 November, 2011, pp. 357--362,
arXiv:1105.1574v1 [quant-ph], 9 May 2011.
\bibitem{VP_2013a}
I.G.Vladimirov, and I.R.Petersen,
A quasi-separation principle and Newton-like scheme for coherent quantum LQG control,
\emph{Syst. Contr. Lett.}, vol. 62, no. 7, 2013, pp. 550--559.
\bibitem{VP_2013b}
I.G.Vladimirov, and I.R.Petersen,
Coherent quantum filtering for physically realizable linear quantum plants,
Proc. European Control Conference, IEEE, Zurich, Switzerland, 17-19 July 2013,  pp. 2717--2723.


\bibitem{VPJ_2018a}
I.G.Vladimirov, I.R.Petersen, and M.R.James, Multi-point Gaussian states, quadratic–exponential cost functionals, and large deviations estimates for linear quantum stochastic systems, \textit{Appl. Math. Optim.}, 2018, pp. 1--55.

\bibitem{VPJ_2018b}
I.G.Vladimirov, I.R.Petersen, and M.R.James,
Risk-sensitive performance criteria and robustness of quantum systems with a relative entropy description of state uncertainty, 23rd International Symposium on Mathematical Theory of Networks and Systems (MTNS 2018),
Hong Kong University of Science and Technology, Hong Kong, July 16-20, 2018, pp. 482--488.



\bibitem{VPJ_2018c}
I.G.Vladimirov, I.R.Petersen, and M.R.James, Parametric randomization,  complex symplectic factorizations, and quadratic-exponential functionals for Gaussian quantum states,
 	{\tt arXiv:1809.06842 [quant-ph]}.


\bibitem{W_1981}
P.Whittle, Risk-sensitive linear/quadratic/Gaussian
control, \textit{Adv. Appl. Probab.}, vol. 13,  1981, pp. 764--77.
\bibitem{WM_2010}
H.M.Wiseman, and G.J.Milburn,
\emph{Quantum measurement and control},
Cambridge University Press,
Cambridge, 2010.
\bibitem{YB_2009}
N.Yamamoto, and L.Bouten,
Quantum risk-sensitive estimation and robustness,
\emph{IEEE Trans. Automat. Contr.}, vol. 54, no. 1, 2009, pp. 92--107.
\bibitem{ZJ_2012}
G.Zhang, and M.R.James,
Quantum feedback networks and control: a brief survey,
\emph{Chinese Sci. Bull.}, vol. 57, no. 18, 2012, pp. 2200--2214,
arXiv:1201.6020v2 [quant-ph], 26 February 2012.}
\end{thebibliography}
\end{document}